\newcommand\version{June 19, 2017}
\newtheorem{theorem}{Theorem}%[section]
\newtheorem{proposition}[theorem]{Proposition}
\newtheorem{lemma}[theorem]{Lemma}
\newtheorem{corollary}[theorem]{Corollary}
\theoremstyle{definition}
\theoremstyle{remark}
\newtheorem{remark}[theorem]{Remark}
\newcommand{\C}{\mathbb{C}}
\renewcommand{\epsilon}{\varepsilon}
\newcommand{\N}{\mathbb{N}}
\renewcommand{\phi}{\varphi}
\newcommand{\R}{\mathbb{R}}
\DeclareMathOperator{\diag}{diag}
\DeclareMathOperator{\re}{Re}
\DeclareMathOperator{\tr}{Tr}
\begin{document}

\title[Norms of quantum Gaussian multi-mode channels --- \version]{Norms of quantum Gaussian multi-mode channels}

\author{Rupert L. Frank}
\address[Rupert L. Frank]{Mathematisches Institut, Ludwig-Maximilans Universit\"at M\"unchen, Theresienstr. 39, 80333 M\"unchen, Germany, and Department of Mathematics, California Institute of Technology, Pasadena, CA 91125, USA}
\email{rlfrank@caltech.edu}

\author{Elliott H. Lieb}
\address[Elliott H. Lieb]{Departments of Mathematics and Physics, Princeton University, Princeton, NJ 08544, USA}
\email{lieb@princeton.edu}

\begin{abstract}
We compute the $\mathcal S^p \to \mathcal S^p$ norm of a general Gaussian gauge-covariant multi-mode channel for any $1\leq p<\infty$, where $\mathcal S^p$ is a Schatten space. As a consequence, we verify the Gaussian optimizer conjecture and the multiplicativity conjecture in these cases. 
\end{abstract}

\maketitle

\renewcommand{\thefootnote}{${}$} \footnotetext{\copyright\, 2017 by
  the authors. This paper may be reproduced, in its entirety, for
  non-commercial purposes.}

\section{Introduction}

Gaussian quantum channels play a fundamental role in quantum information theory and quantum optics. They appear, for instance, as a model of attenuation, amplification and noise in electromagnetic communications through metal wires, optical fibers or free space. Despite their ubiquity several fundamental mathematical questions about their structure remain still unsolved. Among them are the Gaussian optimizer conjecture and the additivity conjecture. Our goal here is to contribute a new family of special cases in which we can verify both of these conjectures.

The two conjectures are concerned with the norm of a Gaussian channel acting from a Schatten space $\mathcal S^p$ to a Schatten space $\mathcal S^q$. (We recall the definition of Schatten spaces at the beginning of the following section.) The Gaussian optimizer conjecture states that, in order to compute this norm, it suffices to test the channel on Gaussian states. An affirmative answer to this question would be a non-commutative analogue of a theorem by one of us (E.H.L.) which says that in order to compute the norm of an integral operator with a Gaussian integral kernel from $L^p(\R^d)$ to $L^q(\R^d)$ it suffices to test the integral operator on Gaussian functions \cite{Li}. The Gaussian optimizer conjecture is known to be true for gauge-covariant multi-mode Gaussian channels if $p=1$ \cite{MaGiHo,Ho1} (see also \cite{GiHoGP} for the proof of the entropy version) and for a subclass of gauge-covariant single-mode channels (namely quantum limited attenuators and amplifiers) for any $p$ and $q$ \cite{DPTrGi} (see also \cite{DPTrGi2} for a proof of the entropy version). Our \emph{main result} (Theorem \ref{maingauss}) is that the Gaussian optimizer conjecture is true for gauge-covariant multi-mode channels if $p=q$. Moreover, we are able to compute the corresponding norm explicitly in terms of the parameters of the channel.

The additivity conjecture asks whether the $\mathcal S^p\to\mathcal S^q$ norm of an $M$-fold tensor product of a Gaussian channel is equal to the $M$-th power of the norm of the channel (so the logarithms of the norms are additive, explaining the name of the conjecture). For a history of this problem and a review of some important results we refer to \cite{Ho1} and the references therein. For general quantum channels this additivity is known to be false, but it has been suggested that it might be true for Gaussian channels. Again the conjecture has been verified for gauge-covariant multi-mode channels if $p=1$ \cite{Ho1}. As a \emph{consequence of our main result} (Corollary \ref{cormult}), we are able to conclude that the additivity conjecture holds for $p=q$ for general gauge-covariant multi-mode channels.

The main ingredient in our proof is an abstract bound on the $\mathcal S^p\to\mathcal S^p$ norm of a positive (not necessarily completely positive and not necessarily trace preserving) map on operators (Theorem \ref{mainabstract}). This bound is strongly motivated by the works \cite{Be} and \cite{MHRe} and is obtained by a simple complex interpolation argument. What is remarkable is that this bound is optimal for gauge-covariant Gaussian channels. This is verified in the proof of Theorem \ref{maingauss} using explicit computations with Gaussian states. We will show there that the norm is attained asymptotically in the limit of an infinite temperature thermal state. Note that this is in contrast to the case $p=1$ where the norm is attained at the vacuum (which corresponds to zero temperature). Also, our explicit expression for the $\mathcal S^p\to\mathcal S^p$ norm shows that it is completely determined by the amplification/attenuation matrix $K^*K$ characteristic of the channel, whereas the explicit expression for the $\mathcal S^1\to\mathcal S^p$ norm \cite[Subsection 3.5]{Ho1} shows that the latter is determined by the noise matrix $\mu-K^*K/2$ of the channel. Therefore, our results are in some sense complementary to those in \cite{GiHoGP,Ho1,MaGiHo}, although the mathematical tools are completely different.

\subsection*{Acknowledgements}

We are grateful to Mark Wilde for references concerning Corollary~\ref{entgain} and to Saikat Guha for discussions about Gaussian channels. Partial support by the U.S. National Science Foundation through grants DMS-1363432 (R.L.F.) and PHY-1265118 (E.H.L.) is acknowledged.

%%%%%%%%%%%%%%%%%%%%%%%

\section{An abstract norm bound}

In this section we present a bound in the general setting of a separable complex Hilbert space $\mathfrak H$. We denote by $\mathcal B$ the bounded operators on $\mathfrak H$ and by $\mathcal S^p$, $1\leq p<\infty$, the Schatten class operators of order $p$, that is, the space of all compact operators for which
$$
\| K \|_{\mathcal S^p} = \left( \tr (K^*K)^{p/2} \right)^{1/p} <\infty \,.
$$
As usual, we set $p'=p/(p-1)$ and, given a linear map $\mathcal N:\mathcal S^1\to\mathcal S^1$, we denote the dual map by $\mathcal N^*$.

\begin{theorem}\label{mainabstract}
Let $\mathcal N:\mathcal B\to\mathcal B$ be positive. Then for any $1<p<\infty$,
$$
\left\| \mathcal N \right\|_{\mathcal S^p \to \mathcal S^p} \leq \left\| \mathcal N(1) \right\|_{\mathcal B}^{1/p'}  \left\| \mathcal N^*(1) \right\|_{\mathcal B}^{1/p} \,.
$$
\end{theorem}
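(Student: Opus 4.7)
The plan is to prove Theorem \ref{mainabstract} by complex interpolation between the endpoints $p=\infty$ and $p=1$, applied to an analytic family of trace functionals built from the polar decompositions of the input and of a dual test operator. The right-hand side $\|\mathcal N(1)\|_{\mathcal B}^{1/p'}\|\mathcal N^*(1)\|_{\mathcal B}^{1/p}$ is exactly the geometric mean one expects from interpolating the bound $\|\mathcal N(1)\|_{\mathcal B}$ at $p=\infty$ (i.e.\ $\theta=0$) with the bound $\|\mathcal N^*(1)\|_{\mathcal B}$ at $p=1$ (i.e.\ $\theta=1$), evaluated at $\theta = 1/p$.

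First I would verify the two endpoint identities. The Russo--Dye theorem, applied to the positive linear map $\mathcal N$ on the unital C*-algebra $\mathcal B$, yields $\|\mathcal N\|_{\mathcal B\to\mathcal B} = \|\mathcal N(1)\|_{\mathcal B}$. The dual map $\mathcal N^*$ is also positive, since $\tr(\mathcal N^*(\rho) A) = \tr(\rho\,\mathcal N(A)) \geq 0$ whenever $\rho, A\geq 0$; combining Banach-space duality with another application of Russo--Dye gives $\|\mathcal N\|_{\mathcal S^1\to\mathcal S^1} = \|\mathcal N^*\|_{\mathcal B\to\mathcal B} = \|\mathcal N^*(1)\|_{\mathcal B}$.

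For the interpolation step, fix $A\in\mathcal S^p$ and $B\in\mathcal S^{p'}$; by a density argument they may be taken of finite rank. Write the polar decompositions $A = U|A|$, $B=V|B|$ and define on the closed strip $0 \leq \Rp z \leq 1$
$$
F(z) = \tr\!\left( \mathcal N\bigl( U\, |A|^{p z}\bigr)\, V\, |B|^{p'(1-z)} \right).
$$
This $F$ is bounded and analytic in the strip. On $\Rp z = 0$ the operator $U|A|^{pz}$ has $\mathcal B$-norm at most one while $V|B|^{p'(1-z)}$ has $\mathcal S^1$-norm equal to $\|B\|_{\mathcal S^{p'}}^{p'}$, and combining with the first endpoint gives $|F(z)| \leq \|\mathcal N(1)\|_{\mathcal B}\,\|B\|_{\mathcal S^{p'}}^{p'}$. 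Symmetrically, rewriting $F(z) = \tr\bigl( U|A|^{pz}\, \mathcal N^*(V|B|^{p'(1-z)}) \bigr)$ and applying the second endpoint on $\Rp z = 1$ gives $|F(z)| \leq \|A\|_{\mathcal S^p}^p\,\|\mathcal N^*(1)\|_{\mathcal B}$. Hadamard's three-lines lemma at $z = 1/p$, where $|A|^{pz} = |A|$ and $|B|^{p'(1-z)} = |B|$, then yields
$$
|\tr(\mathcal N(A) B)| = |F(1/p)| \leq \|\mathcal N(1)\|_{\mathcal B}^{1/p'} \|\mathcal N^*(1)\|_{\mathcal B}^{1/p}\, \|A\|_{\mathcal S^p}\, \|B\|_{\mathcal S^{p'}},
$$
and the theorem follows by taking the supremum over unit-norm $B\in\mathcal S^{p'}$ and approximating general $A \in \mathcal S^p$ by finite-rank operators.

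The main technical issue is justifying the analyticity and boundedness of $F$ on the strip together with the boundary bounds $\||A|^{pz}\|_{\mathcal B} \leq 1$ on $\Rp z = 0$ and $\||A|^{pz}\|_{\mathcal S^1} = \|A\|_{\mathcal S^p}^p$ on $\Rp z = 1$; these are routine for finite-rank $A$ via the spectral theorem, which is why I would first establish the estimate in that case and then extend by density. Notably, neither complete positivity nor trace preservation of $\mathcal N$ plays any role in the argument.
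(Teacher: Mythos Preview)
Your proposal is correct and follows essentially the same approach as the paper: the paper likewise splits the argument into (i) the Russo--Dye identities $\|\mathcal N\|_{\mathcal B\to\mathcal B}=\|\mathcal N(1)\|_{\mathcal B}$ and $\|\mathcal N\|_{\mathcal S^1\to\mathcal S^1}=\|\mathcal N^*(1)\|_{\mathcal B}$, and (ii) a three-lines interpolation using the polar decompositions of the input and of a finite-rank dual test operator. The only cosmetic difference is that the paper first proves the interpolation inequality $\|\mathcal N\|_{\mathcal S^p\to\mathcal S^p}\leq\|\mathcal N\|_{\mathcal B\to\mathcal B}^{1/p'}\|\mathcal N\|_{\mathcal S^1\to\mathcal S^1}^{1/p}$ as a separate lemma (bounding on $\Rp z=1$ via $\|\mathcal N\|_{\mathcal S^1\to\mathcal S^1}$ rather than via $\mathcal N^*$) and then substitutes the endpoint identities, whereas you fold both steps together.
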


We emphasize that we only assume positivity of $\mathcal N$, not complete positivity.

This theorem is an immediate consequence of the following two lemmas.

\begin{lemma}
Let $\mathcal N:\mathcal S^1\to\mathcal S^1$. Then for any $1< p<\infty$,
$$
\left\| \mathcal N \right\|_{\mathcal S^p \to \mathcal S^p} \leq \left\| \mathcal N \right\|_{\mathcal B\to\mathcal B}^{1/p'}  \left\| \mathcal N \right\|_{\mathcal S^1\to\mathcal S^1}^{1/p}  \,.
$$
\end{lemma}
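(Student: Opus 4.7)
This lemma is the Riesz-Thorin interpolation bound for the Schatten scale applied to $\mathcal{N}$, regarding $(\mathcal{S}^1,\mathcal{B})$ as the endpoints and $\mathcal{S}^p$ as the intermediate space at $\theta := 1/p'$, so that $1-\theta = 1/p$. My plan is to prove it directly via Hadamard's three-lines theorem rather than invoke general interpolation theory as a black box. By the duality $(\mathcal{S}^p)^* = \mathcal{S}^{p'}$ under the pairing $(A,B)\mapsto \tr(B^*A)$ together with density of finite-rank operators, it suffices to bound $|\tr(B^* \mathcal{N}(A))|$ by the right-hand side of the claimed inequality for finite-rank $A,B$ with $\|A\|_{\mathcal{S}^p} = \|B\|_{\mathcal{S}^{p'}} = 1$.

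Writing polar decompositions $A = U|A|$ and $B = V|B|$, I introduce the analytic operator-valued families
$$
A_z := U|A|^{p(1-z)}, \qquad B_z := V|B|^{p'z}, \qquad 0 \leq \re z \leq 1,
$$
defined via the holomorphic functional calculus on the (finite-dimensional) closures of the ranges of $|A|$ and $|B|$ and extended by zero on their kernels. Finite rank makes each of $A_z$, $B_z$ entire and uniformly bounded on the strip, and the exponents are chosen so that $A_\theta = A$ and $B_\theta = B$. A direct computation of singular values yields $\|A_z\|_{\mathcal{S}^1} = \|A\|_{\mathcal{S}^p}^p = 1$ and $\|B_z\|_{\mathcal{B}} \leq 1$ on $\re z = 0$, and symmetrically $\|A_z\|_{\mathcal{B}} \leq 1$ and $\|B_z\|_{\mathcal{S}^1} = \|B\|_{\mathcal{S}^{p'}}^{p'} = 1$ on $\re z = 1$. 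Consequently the scalar function $F(z) := \tr(B_z^* \mathcal{N}(A_z))$ is holomorphic and bounded on the closed strip, and the trace--H\"older inequality $|\tr(XY)| \leq \|X\|_{\mathcal{S}^1}\|Y\|_{\mathcal{B}}$ gives $|F(z)| \leq \|\mathcal{N}\|_{\mathcal{S}^1 \to \mathcal{S}^1}$ on $\re z = 0$ and $|F(z)| \leq \|\mathcal{N}\|_{\mathcal{B} \to \mathcal{B}}$ on $\re z = 1$. Applying Hadamard's three-lines theorem at $z = \theta$, together with $F(\theta) = \tr(B^* \mathcal{N}(A))$, then yields the claim.

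Main obstacle: essentially none, since this is a textbook complex-interpolation argument for the Schatten scale. The only mild technical point is the interpretation of $|A|^{p(1-z)}$ at $\re z = 1$ (where the real part of the exponent vanishes) and symmetrically of $|B|^{p'z}$ at $\re z = 0$; restricting the functional calculus to the closure of the range and extending by zero on the kernel handles this cleanly, and for finite-rank operators it is entirely elementary. The density reduction at the start then promotes the estimate to arbitrary $A \in \mathcal{S}^p$ and $B \in \mathcal{S}^{p'}$.
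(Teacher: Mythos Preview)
Your proof is correct and is essentially the same complex-interpolation/three-lines argument as the paper's: a polar decomposition of both test operators, analytic families $U|A|^{p(1-z)}$ and $V|B|^{p'z}$, endpoint Schatten/operator-norm bounds, and Hadamard at $z=1/p'$. The only cosmetic differences are a reflection $z\leftrightarrow 1-z$ in the parametrization and your normalization of both operators to unit norm together with restricting both (rather than just the dual test operator) to finite rank.
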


This lemma, at least in the finite dimensional case, is a special case of a result of \cite{Be}. We include the proof for the sake of completeness.

\begin{proof}
We may assume that $\left\| \mathcal N \right\|_{\mathcal B\to\mathcal B}<\infty$, for otherwise there is nothing to prove. Let $X\in\mathcal S^p$ and write $X=U|X|$ with a partial isometry $U$ and $|X|=(X^* X)^{1/2}$. Moreover, let $K$ be a finite rank operator and write $K=V|K|$ with a partial isometry $V$. The function
$$
f(z) := \tr V |K|^{p'(1-z)} \mathcal N(U |X|^{pz})
$$
is analytic in $\{0< \re z<1\}$ and continuous up to the boundary. Moreover, we have for $y\in\R$,
$$
|f(iy)| = \left| \tr V |K|^{p'(1-iy)} \mathcal N(U |X|^{ipy}) \right| \leq  \left\| \mathcal N(U |X|^{ipy} ) \right\|_{\mathcal B} \tr |K|^{p'} \leq \left\| \mathcal N \right\|_{\mathcal B\to\mathcal B} \tr |K|^{p'} \,.
$$
and
\begin{align*}
|f(1+iy)| & = \left| \tr V |K|^{-ip'y} \mathcal N(U|X|^{p(1+iy)}) \right| \leq \left\| \mathcal N(U|X|^{p(1+iy)}) \right\|_{\mathcal S^1} \\
& \leq \left\| \mathcal N \right\|_{\mathcal S^1\to\mathcal S^1} \left\|  U|X|^{p(1+iy)} \right\|_{\mathcal S^1} = \left\| \mathcal N \right\|_{\mathcal S^1\to\mathcal S^1} \tr |X|^p \,.
\end{align*}
We conclude from Hadamard's three line lemma (see, e.g., \cite[Thm. 5.2.1]{Si}) that
$$
\left| \tr K \mathcal N(X) \right| = |f(1/p)| \leq \left( \left\| \mathcal N \right\|_{\mathcal B\to\mathcal B} \tr |K|^{p'} \right)^{1/p'} \left( \left\| \mathcal N \right\|_{\mathcal S^1\to\mathcal S^1} \tr |X|^p \right)^{1/p}.
$$
By duality and density of finite rank operators we conclude that
$$
\left\| \mathcal N(X) \right\|_p \leq \left\| \mathcal N \right\|_{\mathcal B\to\mathcal B}^{1/p'} \left( \left\| \mathcal N \right\|_{\mathcal S^1\to\mathcal S^1} \tr |X|^p \right)^{1/p} \,.
$$
This is the claimed bound.
\end{proof}

\begin{lemma}
Let $\mathcal N:\mathcal S^1\to\mathcal S^1$ be positive. Then
$$
\left\| \mathcal N\right\|_{\mathcal B\to\mathcal B} = \left\|\mathcal N(1)\right\|_{\mathcal B}
$$
and
$$
\left\| \mathcal N\right\|_{\mathcal S^1\to\mathcal S^1} = \left\|\mathcal N^*(1)\right\|_{\mathcal B} \,.
$$
\end{lemma}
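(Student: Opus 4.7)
My plan is to establish the two identities in turn, with the second obtained from the first by Banach space duality.

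For the first identity $\|\mathcal N\|_{\mathcal B\to\mathcal B}=\|\mathcal N(1)\|_{\mathcal B}$, the inequality $\|\mathcal N(1)\|_{\mathcal B}\le\|\mathcal N\|_{\mathcal B\to\mathcal B}$ is trivial since $\|1\|_{\mathcal B}=1$. For the reverse I would first treat the self-adjoint case: if $A=A^{*}\in\mathcal B$ and $\|A\|_{\mathcal B}\le 1$, then $-1\le A\le 1$ in the operator order, so by positivity of $\mathcal N$ we have $-\mathcal N(1)\le\mathcal N(A)\le\mathcal N(1)$, and hence $\|\mathcal N(A)\|_{\mathcal B}\le\|\mathcal N(1)\|_{\mathcal B}$. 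To handle a general $X\in\mathcal B$ I would use the Cauchy--Schwarz inequality for positive linear functionals: for any unit vector $\psi$, the functional $\omega_{\psi}\circ\mathcal N$, with $\omega_{\psi}(Y)=\langle\psi,Y\psi\rangle$, is positive on $\mathcal B$, so
\[
|\langle\psi,\mathcal N(X)\psi\rangle|^{2}\le\langle\psi,\mathcal N(1)\psi\rangle\,\langle\psi,\mathcal N(X^{*}X)\psi\rangle\le\|\mathcal N(1)\|_{\mathcal B}^{2}\,\|X\|_{\mathcal B}^{2},
\]
where the second step uses $X^{*}X\le\|X\|_{\mathcal B}^{2}\cdot 1$ together with the positivity of $\mathcal N$ and of $\omega_{\psi}$. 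This controls the numerical range of $\mathcal N(X)$, and the passage from the numerical range to the full operator norm is the classical Russo--Dye step.

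The second identity then follows formally by duality. Since $(\mathcal S^{1})^{*}=\mathcal B$ under the trace pairing, $\|\mathcal N\|_{\mathcal S^{1}\to\mathcal S^{1}}=\|\mathcal N^{*}\|_{\mathcal B\to\mathcal B}$. The dual map $\mathcal N^{*}$ is itself positive: for $B\ge 0$ in $\mathcal B$ and $\rho\ge 0$ in $\mathcal S^{1}$,
\[
\tr(\mathcal N^{*}(B)\rho)=\tr(B\,\mathcal N(\rho))\ge 0
\]
since $\mathcal N(\rho)\ge 0$, and this forces $\mathcal N^{*}(B)\ge 0$. Applying the first identity to the positive map $\mathcal N^{*}:\mathcal B\to\mathcal B$ now gives $\|\mathcal N^{*}\|_{\mathcal B\to\mathcal B}=\|\mathcal N^{*}(1)\|_{\mathcal B}$, which combined with the duality identity yields the second claim.

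The main obstacle is the extension from the self-adjoint to the general case in the first identity while keeping the sharp constant: the naive real/imaginary-part decomposition $X=\tfrac{1}{2}(X+X^{*})+\tfrac{1}{2i}(X-X^{*})$ only gives the bound up to a factor of~$2$, and obtaining the sharp constant is the classical Russo--Dye phenomenon for positive linear maps on a unital $C^{*}$-algebra. Once that step is secured, the second identity is a mere formality, resting on Banach duality and the fact that the adjoint of a positive map is positive.
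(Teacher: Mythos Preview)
Your strategy---Russo--Dye for the first identity, then duality plus positivity of $\mathcal N^{*}$ for the second---is exactly the paper's strategy, and the duality half of your argument matches the paper's almost verbatim.

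The difference is in how Russo--Dye is deployed. Your Cauchy--Schwarz computation with the positive functional $\omega_{\psi}\circ\mathcal N$ only bounds the \emph{numerical radius} of $\mathcal N(X)$, and that by itself loses a factor of~$2$; you acknowledge this and then invoke ``the classical Russo--Dye step.'' But Russo--Dye is not a statement relating numerical radius to operator norm, so that phrasing is misleading. If what you mean to cite is the standard corollary ``a positive map on a unital $C^{*}$-algebra satisfies $\|\mathcal N\|=\|\mathcal N(1)\|$,'' then the preceding Cauchy--Schwarz paragraph does no work and can be dropped. The paper instead uses Russo--Dye in its original form (the closed unit ball is the closed convex hull of unitaries) to reduce to unitary $U$, and then handles unitaries explicitly: writing $U=\int_{[-\pi,\pi)}e^{i\theta}\,dE_U(\theta)$ and applying Cauchy--Schwarz to the complex measure $\theta\mapsto\langle\phi,\mathcal N(dE_U(\theta))\psi\rangle$ gives the \emph{off-diagonal} bound $|\langle\phi,\mathcal N(U)\psi\rangle|\le\|\mathcal N(1)\|_{\mathcal B}\|\phi\|\|\psi\|$ directly, without any numerical-radius detour. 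Either route is fine; the paper's is a bit more self-contained, while yours is shorter once you quote the right form of the Russo--Dye corollary.
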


Our proof of this lemma is based on the Russo--Dye theorem and has some similarity with an argument in \cite{MHRe}.

\begin{proof}
We recall that, as a consequence of the Russo--Dye theorem \cite{RuDy} (which says that operators with norm one can be approximated in norm by convex combinations of unitary operators), one has
$$
\left\| \mathcal N\right\|_{\mathcal B\to\mathcal B} = \sup_{U} \|\mathcal N(U)\|_{\mathcal B} \,,
$$
where the supremum is over unitaries. (This is true even without the positivity assumption on $\mathcal N$.) We now show that for positive $\mathcal N$ and any unitary $U$ one has $\|\mathcal N(U)\|_{\mathcal B} \leq \|\mathcal N(1)\|_{\mathcal B}$, which proves the first part of the lemma.

By the spectral theorem for unitary operators, we have
$$
U = \int_{[-\pi,\pi)} e^{i\theta} dE_U(\theta)
$$
where $dE_U$ is a positive operator valued measure on $[-\pi,\pi)$ with
$$
\int_{[-\pi,\pi)} dE_U(\theta) = 1\,.
$$
For $\phi,\psi\in\mathfrak H$ we have
$$
\langle\phi,\mathcal N(U)\psi\rangle = \int_{[-\pi,\pi)} e^{i\theta} \langle\phi,\mathcal N(dE_U(\theta))\psi\rangle \,.
$$
Since the measure is positive and $\mathcal N$ is positive, we have
\begin{align*}
\left| \langle\phi,\mathcal N(U)\psi\rangle \right| 
& \leq \left( \int_{[-\pi,\pi)} \langle\phi,\mathcal N (dE_U(\theta))\phi\rangle \right)^{1/2} \left( \int_{[-\pi,\pi)} \langle\psi,\mathcal N(dE_U(\theta))\psi\rangle \right)^{1/2} \\
& = \left( \langle \phi,\mathcal N(1)\phi\rangle\right)^{1/2} \left( \langle\psi,\mathcal N(1)\psi\rangle\right)^{1/2} = \left\|\mathcal N(1) \right\|_{\mathcal B} \|\phi\| \|\psi\| \,.
\end{align*}
This completes the proof of the first part of the lemma.

The second part follows from the first part by duality. In fact,
\begin{align*}
\left\| \mathcal N\right\|_{\mathcal S^1\to\mathcal S^1}
= \sup \frac{\left| \tr V \mathcal N(X)\right|}{\|V\|_{\mathcal B} \|X\|_{\mathcal S^1}} = \sup \frac{\left| \tr \left( \mathcal N^*(V^*) \right)^* X\right|}{\|V^*\|_{\mathcal B} \|X\|_{\mathcal S^1}} = \left\| \mathcal N^*\right\|_{\mathcal B\to\mathcal B} \,,
\end{align*}
and by the first part the right side is equal to $\|\mathcal N^*(1)\|_{\mathcal B}$.
\end{proof}

%%%%%%%%%%%%%%%%%%%%%%%%%%%%%%

\section{Application to Gaussian multi-mode channels}

Let $s\in\N$ be the number of modes and let $\mathfrak H$ be the bosonic Fock space over $\C^s$. We denote by $a_1,\ldots, a_s$ and $a^*_1,\ldots,a^*_s$ the usual annihilation and creation operators satisfying $[a_j,a_k^*]=\delta_{jk}$ for $1\leq j,k\leq s$. Moreover, for $z\in\C^s$ let
$$
D(z) = \exp\left( \sum_{j=1}^s \left( z_j a_j^* - \overline z_j a_j\right) \right)
$$
be the displacement (or Weyl) operator.

Let $K$ and $\mu$ be (complex) $s\times s$ matrices with $\mu$ Hermitian and
\begin{equation}
\label{eq:mukineq}
\mu\geq \frac12 (1 - K^* K)
\qquad\text{and}\qquad
\mu\geq -\frac12( 1- K^*K) \,.
\end{equation}
A \emph{gauge-covariant Gaussian $s$-mode channel} $\Phi$ with parameters $K$ and $\mu$ is the linear map $\Phi:\mathcal S^1 \to\mathcal S^1$ which is uniquely determined by
\begin{equation}
\label{eq:gcgc}
\Phi^*( D(z) ) = e^{-z^*\mu z} D(K z)
\qquad\text{for all}\ z\in\C^s \,.
\end{equation}
(We note that here we use the notational convention from \cite{Ho1}, and not that from \cite{GiHoGP}, where $K$ is replaced by $K^*$.) By taking $z=0$ we see that $\Phi$ is trace preserving. Moreover, it is well-known \cite[Prop.~12.31]{Ho} that conditions \eqref{eq:mukineq} are necessary and sufficient for $\Phi$ to be completely positive.

Before stating our main result, let us mention some examples in the single-mode case $s=1$ (so $K$ is a complex number and $\mu$ a real number satisfying $\mu\geq |1-|K|^2|/2$). If $0<K<1$ or $K>1$, then $\Phi$ is the attenuator or amplifier channel, respectively, and equality $\mu=|1-K^2|/2$ corresponds to the quantum limited case. If $K=1$, then $\Phi$ is the additive classical Gaussian noise channel. Important examples of multi-mode channels are given by tensor products of single mode channels, but of course there are multi-mode channels that are not obtained in this way.

\begin{theorem}\label{maingauss}
Let $\Phi$ be a gauge-covariant $s$-mode channel with parameters $K$ and $\mu$ and let $1<p<\infty$. Then $\Phi$ extends to a bounded map from $\mathcal S^p$ to $\mathcal S^p$ if and only if $K$ is invertible, and in this case
$$
\left\|\Phi\right\|_{\mathcal S^p\to\mathcal S^p} = (\det K^*K)^{-1/p'} \,.
$$
\end{theorem}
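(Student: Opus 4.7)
My plan is to prove both parts of the theorem in parallel. The upper bound in the invertible-$K$ case comes from Theorem~\ref{mainabstract} applied to an extension of $\Phi$ to $\mathcal{B}$, and the matching lower bound (together with the divergence in the non-invertible case) comes from testing on $s$-mode thermal states $\rho_N$ with mean occupation $N$ in each mode. Everything rests on the transformation law
\[
\Phi(\rho_A) = \rho_{\mu+K^*AK},
\]
valid for every gauge-covariant Gaussian state $\rho_A$ with characteristic function $\tr(\rho_A D(z))=e^{-z^*Az}$; this is immediate from \eqref{eq:gcgc} via $\tr(\Phi(X)D(z))=\tr(X\Phi^*(D(z)))$.

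For the upper bound, assume $K$ is invertible and write $\rho_N := \rho_{(N+1/2)I}$. Since $(N+1)^s\rho_N = e^{-\beta_N\hat N}$ with $\beta_N\to 0$, one has $(N+1)^s\rho_N \nearrow 1$ strongly. The transformation law gives $\Phi((N+1)^s\rho_N) = (N+1)^s\rho_{\mu+(N+1/2)K^*K}$; writing this as a ratio of partition functions times an unnormalized thermal operator, and using $\det(\mu+(N+1/2)K^*K+\tfrac12 I)\sim (N+1/2)^s\det K^*K$ together with strong convergence of the unnormalized thermal operator to $1$, I get $\Phi((N+1)^s\rho_N)\to (\det K^*K)^{-1}\cdot 1$ strongly. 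Combined with monotonicity of positive maps this extends $\Phi$ to a bounded positive map $\tilde\Phi:\mathcal{B}\to\mathcal{B}$ with $\tilde\Phi(1)=(\det K^*K)^{-1}\cdot 1$. Trace preservation gives $\tilde\Phi^*(1)=1$, and Theorem~\ref{mainabstract} delivers
\[
\|\Phi\|_{\mathcal{S}^p\to\mathcal{S}^p} \le \|\tilde\Phi(1)\|_{\mathcal{B}}^{1/p'}\|\tilde\Phi^*(1)\|_{\mathcal{B}}^{1/p} = (\det K^*K)^{-1/p'}.
\]

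For the lower bound and the only-if direction, I would evaluate $\|\Phi(\rho_N)\|_{\mathcal{S}^p}/\|\rho_N\|_{\mathcal{S}^p}$ directly. Let $k_1^2,\dots,k_s^2\ge 0$ be the eigenvalues of $K^*K$; then the eigenvalues of the output covariance $\mu+(N+1/2)K^*K$ are asymptotic to $(N+1/2)k_i^2$, and by the normal-mode decomposition $\|\Phi(\rho_N)\|_{\mathcal{S}^p}$ equals the product over $i$ of single-mode thermal norms with those occupations. Using $\|\rho_{\bar n}\|_{\mathcal{S}^p}^p = (1-q)^p/(1-q^p)$ with $q=\bar n/(\bar n+1)$ and $1-q^p\sim p(1-q)$ as $\bar n\to\infty$, one gets $\|\rho_{\bar n}\|_{\mathcal{S}^p}\sim p^{-1/p}\bar n^{-1/p'}$, so the ratio tends to $\prod_{i:k_i>0} k_i^{-2/p'}$. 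When every $k_i>0$ this is $(\det K^*K)^{-1/p'}$, matching the upper bound. When some $k_i=0$, the input norm decays but the factors from the degenerate modes stay bounded, so the ratio grows like $N^{r/p'}$ with $r=\dim\ker K$, ruling out a bounded extension.

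The principal technical hurdle is the extension step: $\Phi$ is a priori defined only on $\mathcal{S}^1$, so passing from the formal identity $\Phi(1)=(\det K^*K)^{-1}\cdot 1$ to an object $\tilde\Phi:\mathcal{B}\to\mathcal{B}$ that can be plugged into Theorem~\ref{mainabstract} requires combining the strong convergence of the thermal approximants $(N+1)^s\rho_N$ with the monotone-limit extension of positive maps. Once that is done, the bound collapses to a single determinant, and the thermal-state test shows it is saturated in the infinite-temperature limit---in contrast with the $p=1$ case, where the vacuum is the optimizer.
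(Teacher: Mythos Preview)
Your proposal is correct and follows essentially the same route as the paper: the upper bound comes from Theorem~\ref{mainabstract} together with $\Phi^*(1)=1$ and the identification $\Phi(1)=(\det K^*K)^{-1}$ obtained as the infinite-temperature limit of thermal states (your $(N+1)^s\rho_N\nearrow 1$ is exactly the paper's computation \eqref{eq:phi1} in Lemma~\ref{comps}), and the lower bound comes from evaluating the ratio $\|\Phi(\omega_E^{\otimes s})\|_{\mathcal S^p}/\|\omega_E^{\otimes s}\|_{\mathcal S^p}$ as $E\to\infty$ via the same transformation law and single-mode norm asymptotics. The only cosmetic difference is that you make the extension from $\mathcal S^1$ to $\mathcal B$ explicit, whereas the paper packages this into the limit computation of Lemma~\ref{comps}.
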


Before proving this theorem we deduce two simple corollaries. The first one concerns an entropy inequality which gives the minimal entropy gain of a Gaussian gauge-covariant channel. This inequality was previously derived in \cite{Ho0} (even for not necessarily gauge-covariant Gaussian channels) by a different method of proof.

\begin{corollary}\label{entgain}
Let $\Phi$ be a gauge-covariant $s$-mode channel with parameters $K$ and $\mu$ and assume that $K$ is invertible. Then for any non-negative $X$ on $\mathfrak{H}$,
$$
-\tr\Phi(X)\ln\Phi(X) \geq -\tr X\ln X + \left( \ln\det K^* K \right) \tr X \,.
$$
Moreover, this inequality is optimal in the sense that
$$
\inf_{\rho\geq 0\,,\ \tr\rho =1\,,\ -\tr\rho\ln\rho<\infty} \left( -\tr\Phi(\rho)\ln\Phi(\rho) + \tr \rho\ln \rho \right) = \ln\det K^* K \,.
$$
\end{corollary}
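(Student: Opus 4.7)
The plan for the first inequality is to differentiate the Schatten norm bound of Theorem~\ref{maingauss} at $p = 1$. By homogeneity of both sides under $X \mapsto cX$ (using $\tr\Phi(X) = \tr X$), it suffices to prove the inequality when $\tr\rho = 1$, i.e., when $\rho$ is a state; one then needs $S(\Phi(\rho)) \geq S(\rho) + \ln\det K^*K$ with $S(\sigma) := -\tr\sigma\ln\sigma$ the von~Neumann entropy. Raising both sides of Theorem~\ref{maingauss} to the $p$-th power gives $\tr\Phi(\rho)^p \leq (\det K^*K)^{1-p}\tr\rho^p$ for every $p > 1$. Taking the logarithm and dividing by $1-p < 0$ (which reverses the inequality) yields $S_p(\Phi(\rho)) \geq S_p(\rho) + \ln\det K^*K$, where $S_p(\sigma) := (1-p)^{-1}\ln\tr\sigma^p$ is the order-$p$ R\'enyi entropy. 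Letting $p \to 1^+$ and invoking the standard convergence $S_p(\sigma) \to S(\sigma)$ (valid whenever $S(\sigma) < \infty$) gives the claimed inequality; the case $S(\rho) = +\infty$ is trivial, since the right-hand side is then $+\infty$ as well.

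For the optimality claim, the plan is to saturate the bound with multi-mode thermal input states as the temperature diverges, in accord with the introduction's remark that the norm of Theorem~\ref{maingauss} is asymptotically attained at an infinite-temperature thermal state. Concretely, let $\rho_t$ be the $s$-mode gauge-invariant Gaussian state with mean-number matrix $M = tI$, i.e., the tensor product of single-mode thermal states with mean photon number $t$ in each mode. Then $S(\rho_t) = s\,g(t)$ where $g(t) := (t+1)\ln(t+1) - t\ln t = \ln t + 1 + O(1/t)$. A direct computation from \eqref{eq:gcgc} shows that $\Phi(\rho_t)$ is gauge-invariant Gaussian with mean-number matrix $M' = tK^*K + \mu - \tfrac12(I - K^*K)$, whose entropy equals $\sum_{j=1}^s g(m'_j)$ for $m'_j$ the eigenvalues of $M'$. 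Since $\mu - \tfrac12(I - K^*K)$ is bounded, Weyl's inequality gives $m'_j = t\lambda_j + O(1)$ with $\lambda_j$ the eigenvalues of $K^*K$, so $S(\Phi(\rho_t)) = s\ln t + \ln\det K^*K + s + o(1)$ and therefore $S(\Phi(\rho_t)) - S(\rho_t) \to \ln\det K^*K$ as $t \to \infty$.

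The main step requiring justification is the entropy formula $\sum_j g(m_j)$ for a gauge-invariant multi-mode Gaussian state in terms of the eigenvalues $m_j$ of its mean-number matrix: a unitary mode transformation $a_j \mapsto b_i = \sum_j U_{ji}\,a_j$ diagonalizing $M$ preserves the canonical commutation relations and turns the state into a tensor product of single-mode thermal states with photon numbers $m_j$. The essence of the asymptotic computation is then that the noise term $\mu$ becomes negligible compared to $tK^*K$ as $t \to \infty$, so only $K^*K$ governs the leading entropy gain --- complementary to the $\mathcal S^1 \to \mathcal S^p$ case highlighted in the introduction, where the noise matrix dominates instead. The $p \to 1^+$ R\'enyi-to-von~Neumann limit and the asymptotic expansion of $g$ are standard.
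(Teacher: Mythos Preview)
Your argument matches the paper's: the inequality is obtained there by ``differentiating the bound $\tr\Phi(X)^p \leq (\det K^*K)^{-p+1}\tr X^p$ at $p=1$'' (your R\'enyi-to-von~Neumann limit is exactly this), and optimality is shown in Remark~\ref{entgainopt} via the same thermal trial states $\omega_E^{\otimes s}$ with $E\to\infty$, using the computations of Lemma~\ref{comps}. One minor correction: your handling of the case $S(\rho)=+\infty$ is logically backwards --- the right-hand side being $+\infty$ does not make the inequality trivially true; rather it then asserts $S(\Phi(\rho))=+\infty$, a substantive claim that neither you nor the paper actually verifies.
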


The first part of this corollary follows by differentiating the bound $\tr\Phi(X)^p \leq (\det K^*K)^{-p+1}\tr X^p$ from Theorem \ref{maingauss} at the point $p=1$, where it becomes an equality. We comment on the proof of the second part in Remark \ref{entgainopt} below.

The second corollary concerns the multiplicativity problem for Gaussian channels.

\begin{corollary}\label{cormult}
Let $s_1,\ldots,s_M\in\N$ and for each $m=1,\ldots,M$ let $\Phi_m$ be a gauge-covariant $s_m$-mode channel with parameters $K_m$ and $\mu_m$. Then for each $1<p<\infty$,
$$
\left\|\Phi_1\otimes\cdots\otimes\Phi_M \right\|_{\mathcal S^p\to\mathcal S^p} = \left\|\Phi_1\right\|_{\mathcal S^p\to\mathcal S^p} \cdots \left\|\Phi_M\right\|_{\mathcal S^p\to\mathcal S^p} \,.
$$
\end{corollary}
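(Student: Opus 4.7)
The plan is to reduce the statement to a direct application of Theorem \ref{maingauss} by recognizing the tensor product $\Phi_1\otimes\cdots\otimes\Phi_M$ as itself a single gauge-covariant Gaussian channel.

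First I would identify the bosonic Fock space over $\C^{s_1+\cdots+s_M}$ with the tensor product of the Fock spaces over each $\C^{s_m}$, and observe that under this identification the displacement operators factor as $D(z_1,\ldots,z_M) = D_1(z_1)\otimes\cdots\otimes D_M(z_M)$, where $D_m$ is the displacement operator on the $m$-th Fock space. Applying the defining identity \eqref{eq:gcgc} to each $\Phi_m^*$ on its factor then yields
$$(\Phi_1\otimes\cdots\otimes\Phi_M)^*(D(z)) = \bigotimes_{m=1}^M e^{-z_m^*\mu_m z_m} D_m(K_m z_m) = e^{-z^* \mu z}\, D(Kz),$$
with the block-diagonal parameters
$$K = K_1\oplus\cdots\oplus K_M, \qquad \mu = \mu_1\oplus\cdots\oplus\mu_M.$$
Since positive semidefiniteness of $\mu_m \pm \tfrac12(1-K_m^*K_m)$ for each $m$ transfers to the block-diagonal sum, the inequalities \eqref{eq:mukineq} hold for the new pair $(K,\mu)$, so the tensor product is a gauge-covariant Gaussian $(s_1+\cdots+s_M)$-mode channel.

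With this identification in place, Theorem \ref{maingauss} applies to both sides. The block-diagonal $K$ is invertible exactly when every $K_m$ is, so both the left-hand side and the right-hand side of the desired identity are simultaneously finite or simultaneously $+\infty$. In the finite case, $\det K^*K = \prod_m \det K_m^*K_m$, and therefore
$$\|\Phi_1\otimes\cdots\otimes\Phi_M\|_{\mathcal S^p\to\mathcal S^p} = (\det K^*K)^{-1/p'} = \prod_{m=1}^M (\det K_m^*K_m)^{-1/p'} = \prod_{m=1}^M\|\Phi_m\|_{\mathcal S^p\to\mathcal S^p}.$$

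I do not expect any genuine obstacle beyond carefully verifying the factorization of displacement operators under the Fock-space tensor decomposition; this is standard (it follows from the additivity of the exponent on commuting modes), but it is the one step that must be checked to make the whole reduction legitimate. Once the tensor product has been identified as a gauge-covariant Gaussian channel with block-diagonal parameters, the multiplicativity is just the multiplicativity of the determinant on block-diagonal matrices.
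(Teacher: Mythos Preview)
Your proposal is correct and follows essentially the same route as the paper: recognize $\Phi_1\otimes\cdots\otimes\Phi_M$ as a gauge-covariant $(s_1+\cdots+s_M)$-mode channel with block-diagonal parameters $K=\bigoplus_m K_m$, $\mu=\bigoplus_m \mu_m$, then apply Theorem~\ref{maingauss} and the multiplicativity of the determinant. You in fact spell out more of the verification (factorization of displacement operators, transfer of \eqref{eq:mukineq}, the simultaneous finite/infinite case) than the paper does.
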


This corollary simply follows from the fact that $\Phi_1\otimes\cdots\otimes\Phi_M$ is a gauge-covariant $(s_1+\ldots+s_M)$-mode channel with parameters $K$ and $\mu$ given as block diagonal matrices with entries $K_m$ and $\mu_m$ and the fact that $\det K^* K = \det K_1^* K_1 \cdots \det K_M^*K_M$.

In order to deduce the upper bound on the norm from Theorem \ref{mainabstract} and to prove a corresponding lower bound we will make use of a computation involving the following family of single-mode Gaussian states parametrized by $E\geq 0$,
\begin{equation}
\label{eq:thermalstates}
\omega_E = \frac{1}{E+1} \sum_{n=0}^\infty \left( \frac{E}{E+1}\right)^{n} |n\rangle\langle n| \,.
\end{equation}
Here $(|n\rangle)_{n=0}^\infty$ is the canonical basis in the single-mode space, i.e., the Fock space over $\C$ which is, of course, simply $\ell^2(\N_0)$ with $\N_0=\{0,1,2,\ldots\}$. (The states $\omega_E$ are thermal states of the Hamiltonian $a^* a$.) Then the $s$-fold tensor product
$$
\omega_E^{\otimes s}
$$
is a Gaussian state on the $s$-mode space $\mathfrak H$.

\begin{lemma}\label{comps}
Let $\Phi$ be a gauge-covariant Gaussian $s$-mode channel with parameters $K$ and $\mu$, and let $E\geq 0$. Then, for $1\leq p<\infty$,
\begin{align}
\label{eq:phigaussian}
\left\|\omega_E^{\otimes s}\right\|_{\mathcal S^p} & = \left( (E+1)^p - E^p \right)^{-s/p} \,, \\
\left\|\Phi(\omega_E^{\otimes s})\right\|_{\mathcal S^p} & = \left( \det \left( \left( (E+1/2)K^*K + \mu +1/2\right)^p - \left( (E+1/2)K^*K + \mu -1/2\right)^p \right) \right)^{-1/p} \notag
\end{align}
and, if $K$ is invertible,
\begin{align}
\label{eq:phi1}
\Phi(1) = (\det K^*K)^{-1} \,.
\end{align}
\end{lemma}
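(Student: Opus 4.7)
For the norm of $\omega_E^{\otimes s}$, the state is already diagonal in the Fock basis with eigenvalues $\lambda_n = (E+1)^{-1}(E/(E+1))^n$. I would compute $\|\omega_E\|_{\mathcal S^p}^p = \sum_n \lambda_n^p = ((E+1)^p - E^p)^{-1}$ by summing a geometric series, and then use $\|\omega_E^{\otimes s}\|_{\mathcal S^p} = \|\omega_E\|_{\mathcal S^p}^s$, which comes from multiplicativity of Schatten norms on tensor products.

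For the norm of $\Phi(\omega_E^{\otimes s})$, the plan is to compute its characteristic function. The characteristic function of a single-mode thermal state is the Gaussian $e^{-(E+1/2)|z|^2}$, so $\chi_{\omega_E^{\otimes s}}(z) = e^{-(E+1/2) z^* z}$. Applying $\chi_{\Phi(\rho)}(z) = \tr(\rho\, \Phi^*(D(z)))$ together with \eqref{eq:gcgc} yields $\chi_{\Phi(\omega_E^{\otimes s})}(z) = e^{-z^* M z}$ with $M = (E+1/2) K^* K + \mu$. This identifies $\Phi(\omega_E^{\otimes s})$ as the centered gauge-covariant Gaussian state with second-moment matrix $M$. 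Diagonalizing $M = U\, \diag(m_1,\ldots,m_s)\, U^*$ with $U$ unitary on $\C^s$ and lifting $U$ to the associated Bogoliubov unitary on Fock space, this state becomes unitarily equivalent to $\bigotimes_j \omega_{m_j - 1/2}$. Unitary invariance of Schatten norms, the single-mode formula applied mode-by-mode, and repackaging the product over $j$ as the determinant of $(M+1/2)^p - (M-1/2)^p$ then finish the computation.

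For $\Phi(1) = (\det K^* K)^{-1}$, the plan is a duality argument in the Weyl calculus. For $A \in \mathcal S^1$ the Weyl inversion $A = \pi^{-s}\int \chi_A(z)\, D(-z)\, dz$ combined with \eqref{eq:gcgc} gives
$$
\tr\Phi^*(A) = \pi^{-s}\int \chi_A(z)\, e^{-z^*\mu z}\, \tr D(-Kz)\, dz \,.
$$
Using the distributional identity $\tr D(w) = \pi^s \delta(w)$ and the change of variable $w = Kz$, whose Jacobian on $\C^s \cong \R^{2s}$ is $|\det K|^2 = \det K^* K$, collapses this to $(\det K^* K)^{-1}\chi_A(0) = (\det K^* K)^{-1} \tr(A)$. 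Since this holds for every $A \in \mathcal S^1$, duality identifies $\Phi(1)$ with the scalar $(\det K^* K)^{-1}$ times the identity.

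I expect the third identity to be the main obstacle, because $1 \notin \mathcal S^1$, so $\Phi(1)$ must be interpreted via duality or a weak limit rather than computed from the definition, and the distributional identity $\tr D(w) = \pi^s \delta(w)$ must be handled carefully (for instance by testing on a dense class of $A$ with rapidly decaying characteristic functions). An alternative would be to exploit $(E+1)^s \omega_E^{\otimes s} \rightharpoonup 1$ and pass to the limit in the formula just proved for $\Phi(\omega_E^{\otimes s})$, but controlling the $E$-dependent Bogoliubov diagonalization introduces its own technicalities. The first two identities should be essentially routine given the characteristic-function and Bogoliubov-unitary toolkit.
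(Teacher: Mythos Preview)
Your arguments for \eqref{eq:phigaussian} coincide with the paper's: diagonal geometric series for $\|\omega_E\|_{\mathcal S^p}$, characteristic-function identification of $\Phi(\omega_E^{\otimes s})$ as a centered Gaussian with covariance $M=(E+1/2)K^*K+\mu$, diagonalization of $M$ by a unitary on $\C^s$ lifted to Fock space, and reduction to a product of single-mode thermal norms repackaged as a determinant.

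For \eqref{eq:phi1} the paper does \emph{not} take your primary route. It adopts precisely your ``alternative'': it shows that $(E+1)^s\omega_E^{\otimes s}$ increases weakly to $1$, so $(E+1)^s\langle\Psi|\Phi(\omega_E^{\otimes s})|\Psi\rangle$ increases to $\langle\Psi|\Phi(1)|\Psi\rangle$, and then evaluates the limit using the product form $V_E^*\Phi(\omega_E^{\otimes s})V_E=\omega_{e_1}\otimes\cdots\otimes\omega_{e_s}$ already established. The $E$-dependent diagonalization that you flagged as a potential technicality is handled by elementary perturbation theory: the eigenvalues satisfy $e_j/E\to\kappa_j^2$ (the eigenvalues of $K^*K$), the unitaries $U_E$ can be chosen to converge to some $U_\infty$, and hence $V_E\to V_\infty$ in norm; this is enough to pass to the limit against the rank-one operator $|\Psi\rangle\langle\Psi|$. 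Your Weyl-inversion/distributional argument is a legitimate alternative and is shorter on the formal level, but making the identity $\tr D(w)=\pi^s\delta(w)$ and the interchange of trace and integral rigorous (and, implicitly, showing that $\Phi^*(A)$ is trace class so that $\tr\Phi^*(A)$ and the duality step are meaningful) costs about as much work as the limit argument. The paper's route has the advantage of reusing the structure already set up for \eqref{eq:phigaussian} and of giving a direct operator-level meaning to $\Phi(1)$ via monotone convergence, while your route would make the dependence on $\det K^*K$ transparent from the Jacobian without any spectral perturbation theory.
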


We note that the first inequality in \eqref{eq:mukineq} implies that $(E+1/2)K^*K+\mu-1/2\geq E K^*K\geq 0$, so there is no problem with defining its $p$-th power.

\begin{proof}[Proof of Lemma \ref{comps}]
We denote by $e_1,\ldots,e_s$ the eigenvalues of $(E+1/2)K^*K+\mu-1/2$ and let $U_E$ be a unitary $s\times s$ matrix such that
$$
U_E \left( (E+1/2)K^* K+\mu-1/2 \right) U_E^* = \diag (e_1,\ldots,e_s) \,.
$$
By basic representation theory there is a unitary $V_E$ on $\mathfrak H$ such that
$$
V_E D(\zeta) V_E^* = D(U_E^*\zeta)
\qquad\text{for all}\ \zeta\in\C^s \,.
$$
It is well-known \cite[(12.32)]{Ho} that
\begin{equation}
\label{eq:charfcngauss}
\tr \omega_E D(z) = e^{-(E+1/2)|z|^2} 
\qquad\text{for all}\ z\in\C \,,
\end{equation}
and therefore
$$
\tr \omega_E^{\otimes s} D(z) = e^{-(E+1/2)|z|^2} 
\qquad\text{for all}\ z\in\C^s \,.
$$
Thus, by \eqref{eq:gcgc}
\begin{align*}
\tr V_E^* \Phi(\omega_E^{\otimes s}) V_E D(\zeta) & = \tr \Phi(\omega_E^{\otimes s})D(U_E^*\zeta) = e^{-(U_E^*\zeta)^*\left( \mu + (E+1/2)K^* K\right) U_E^*\zeta} \\
& = \prod_{j=1}^s e^{-(e_j+1/2)|\zeta_j|^2}.
\end{align*}
According to \eqref{eq:charfcngauss} the right side is $\prod_{j=1}^s \tr \omega_{e_j} D(\zeta_j) = \tr (\omega_{e_1}\otimes \cdots \otimes \omega_{e_s}) D(\zeta)$. Since Gaussian channels maps Gaussian states into Gaussian states \cite[Sec.~12.4]{Ho} and since Gaussian states are uniquely determined by their characteristic function \cite[Thm.~12.17]{Ho}, we conclude that 
$$
V_E^* \Phi(\omega_E^{\otimes s}) V_E = \omega_{e_1}\otimes \cdots \otimes \omega_{e_s} \,.
$$
Since $V_E$ is unitary, we infer
$$
\left\| \Phi(\omega_E^{\otimes s}) \right\|_{\mathcal S^p}^p = \prod_{j=1}^s \left\| \omega_{e_j} \right\|_{\mathcal S^p}^p \,.
$$
Thus, for the proof of both statements in \eqref{eq:phigaussian} it suffices to compute $\|\omega_E\|_{\mathcal S^p}$. By the explicit expression, we have
$$
\left\| \omega_E \right\|_{\mathcal S^p}^p = \frac{1}{(E+1)^p} \sum_{n=0}^\infty \left( \frac{E}{E+1}\right)^{np} = \frac{1}{(E+1)^p} \ \frac{1}{1- \left(\frac{E}{E+1}\right)^p} = \frac{1}{(E+1)^p - E^p} \,.
$$
This leads to the claimed expressions for the Schatten norms.

It remains to prove \eqref{eq:phi1} under the assumption that $K$ is invertible. It follows from perturbation theory that the eigenvalues of $E^{-1} \left( (E+1/2)KK^*+\mu-1/2 \right)$ converge to those of $K^* K$ as $E\to\infty$ and that one can choose the unitaries $U_E$ in such a way that they converge to a unitary $U_\infty$ on $\C^s$ such that
$$
U_\infty K^* K U_\infty^* = \diag(\kappa_1^2,\ldots,\kappa_s^2)
$$
for some $\kappa_j> 0$, $j=1,\ldots,s$. (The fact that $\kappa_j\neq 0$ comes from the assumed invertibility of $K$.) This implies that the corresponding $V_E$ converge in norm to a unitary $V_\infty$ on $\mathfrak H$ such that
$$
V_\infty D(\zeta) V_\infty^* = D(U_\infty^* \zeta)
\qquad\text{for all}\ \zeta\in\C^s \,.
$$
Let $\Psi\in\mathfrak H$. Since $(E+1)\omega_E$ is increasing with respect to $E$ and converges weakly to the identity, we see that
$$
(E+1)^s \langle\Psi|\Phi(\omega_E^{\otimes s})|\Psi\rangle
$$
is increasing with respect to $E$ and its limit, if it is finite, coincides necessarily with $\langle\Psi|\Phi(1)|\Psi\rangle$. On the other hand, according to the preceeding computation, we have
$$
\langle\Psi|\Phi(\omega_E^{\otimes s})|\Psi\rangle = \tr V_E^* |\Psi\rangle\langle\Psi| V_E \left( \omega_{e_1}\otimes\cdots\otimes\omega_{e_s}\right) \,.
$$
The operators $V_E^*|\Psi\rangle\langle\Psi| V_E$ are compact and converge in norm to $V_\infty^* |\Psi\rangle\langle\Psi| V_\infty$. Therefore, since $e_j\to\infty$ as $E\to\infty$,
$$
(e_1+1)\cdots (e_s+1) \tr V_E^* |\Psi\rangle\langle\Psi| V_E \left( \omega_{e_1}\otimes\cdots\otimes\omega_{e_s}\right) \to \tr V_\infty^* |\Psi\rangle\langle\Psi| V_\infty = \|\Psi\|^2 \,.
$$
Thus, we conclude that
$$
\lim_{E\to\infty} (E+1)^s \langle\Psi|\Phi(\omega_E^{\otimes s})|\Psi\rangle = \lim_{E\to\infty} \frac{(E+1)^s}{(e_1+1)\cdots (e_s+1)} \|\Psi\|^2 \,.
$$
According to the discussion before, we have $e_j/E \to\kappa_j^2$ for $j=1,\ldots, s$ and therefore
$$
\lim_{E\to\infty} \frac{(E+1)^s}{(e_1+1)\cdots (e_s+1)} = \frac{1}{\kappa_1^2\cdots\kappa_s^2} = \frac{1}{\det K^* K} \,.
$$
This completes the proof of \eqref{eq:phi1}.
\end{proof}

\begin{proof}[Proof of Theorem \ref{maingauss}]
\emph{Upper bound.} Since $\Phi$ is trace preserving, we have $\Phi^*(1)=1$. Moreover, by \eqref{eq:phi1} we have $\Phi(1) = (\det K^*K)^{-1}$, provided the latter is finite. Inserting this into the bound from Theorem \ref{mainabstract} we obtain
$$
\left\|\Phi\right\|_{\mathcal S^p\to\mathcal S^p} \leq (\det K^*K)^{-1/p'} \,.
$$

\emph{Lower bound.} According to \eqref{eq:phigaussian} we have
$$
\frac{\left\|\Phi(\omega_E^{\otimes s})\right\|_{\mathcal S^p}^p}{\left\|\omega_E^{\otimes s}\right\|_{\mathcal S^p}^p} = \prod_{j=1}^s \frac{(E+1)^p-E^p}{(e_j+1)^p-e_j^p} \,,
$$
where $e_j$ are the eigenvalues of $(E+1/2)K^*K+\mu-1/2$. As in the previous proof, we have $e_j/E\to\kappa_j^2$, where $\kappa_j^2$ are the eigenvalues of $K^*K$. This yields
$$
\lim_{E\to\infty} \frac{(E+1)^p-E^p}{(e_j+1)^p-e_j^p} = \frac{1}{\kappa_j^{2(p-1)}}
$$
in the sense that the left side diverges to $+\infty$ if $\kappa_j=0$. This proves that
$$
\lim_{E\to\infty} \frac{\left\|\Phi(\omega_E^{\otimes s})\right\|_{\mathcal S^p}^p}{\left\|\omega_E^{\otimes s}\right\|_{\mathcal S^p}^p} = \prod_{j=1}^s \frac1{\kappa_j^{2(p-1)}} = \frac{1}{(\det K^*K)^{p-1}}
$$
in the sense the the left side diverges to $+\infty$ if $K$ is not invertible. Since the left side is a lower bound on $\|\Phi\|_{\mathcal S^p\to\mathcal S^p}^p$, we conclude that the upper bound in the theorem is best possible.
\end{proof}

\begin{remark}
\label{entgainopt}
The optimality statement in Corollary \ref{entgain} is shown similarly as in the preceeding statement. In fact, one verifies that
$$
-\tr \Phi(\omega_E^{\otimes s})\ln\Phi(\omega_E^{\otimes s}) + \tr \omega_E^{\otimes s} \ln \omega_E^{\otimes s} \to \ln\det K^*K
\qquad\text{as}\ E\to\infty \,.
$$
\end{remark}

We end this paper with a result about the $\mathcal S^p\to\mathcal S^q$ norm for $q<p$. This generalizes a result of \cite{DPTrGi} for quantum-limited single mode channels.

\begin{proposition}
Let $\Phi$ be a gauge-covariant $s$-mode channel with parameters $K$ and $\mu$ and let $1\leq q<p<\infty$. Then $\Phi$ does not extend to a bounded map from $\mathcal S^p$ to $\mathcal S^q$.
\end{proposition}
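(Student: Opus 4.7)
The plan is to test $\Phi$ on the thermal states $\omega_E^{\otimes s}$, as in the lower-bound argument for Theorem \ref{maingauss}, and show that
$$
\frac{\|\Phi(\omega_E^{\otimes s})\|_{\mathcal{S}^q}}{\|\omega_E^{\otimes s}\|_{\mathcal{S}^p}} \to \infty \qquad\text{as } E \to \infty.
$$
Since each $\omega_E^{\otimes s}$ lies in every Schatten class (its eigenvalues decay geometrically), this rules out any estimate $\|\Phi(X)\|_{\mathcal{S}^q} \leq C\|X\|_{\mathcal{S}^p}$ on a dense subset of $\mathcal{S}^p$ and hence proves the proposition.

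The formulas in Lemma \ref{comps} are valid for any Schatten exponent in $[1,\infty)$. Using that the commuting matrices $(E+1/2)K^*K + \mu \pm 1/2$ share eigenvectors, the determinantal form of the second formula factors over the eigenvalues $e_j=e_j(E)$ of $(E+1/2)K^*K + \mu - 1/2$, giving
$$
\|\omega_E^{\otimes s}\|_{\mathcal{S}^p} = \left((E+1)^p - E^p\right)^{-s/p}, \qquad
\|\Phi(\omega_E^{\otimes s})\|_{\mathcal{S}^q} = \prod_{j=1}^s \left((e_j+1)^q - e_j^q\right)^{-1/q}.
$$
Let $\kappa_j^2$ denote the eigenvalues of $K^*K$ and $r=\mathrm{rank}(K)$. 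By the perturbative analysis already carried out in the proof of Lemma \ref{comps}, $e_j(E)\sim \kappa_j^2 E$ whenever $\kappa_j>0$, while $e_j(E)$ remains bounded when $\kappa_j=0$. Combined with $(E+1)^p-E^p\sim pE^{p-1}$ and $(e_j+1)^q-e_j^q\sim q e_j^{q-1}$ for $e_j\to\infty$, this shows that the ratio above is of order $E^\alpha$ with
$$
\alpha = \frac{s(p-1)}{p} - \frac{r(q-1)}{q} = (s-r)\left(1 - \frac{1}{p}\right) + r\left(\frac{1}{q} - \frac{1}{p}\right).
$$

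Both summands in the last expression are nonnegative, since $0\leq r\leq s$, $p>1$, and $q<p$; moreover, at least one of them is strictly positive in every case (the second whenever $r\geq 1$, and the first reducing to $s(1-1/p)>0$ when $r=0$). Hence $\alpha>0$, the ratio diverges, and the proposition follows. I do not anticipate any serious obstacle here: the argument is essentially the lower-bound calculation from the proof of Theorem \ref{maingauss}, with the only additional care needed being the bookkeeping of those eigenvalues of $K^*K$ that vanish, which produce bounded rather than growing factors in $\|\Phi(\omega_E^{\otimes s})\|_{\mathcal{S}^q}$ but do not affect the divergence of the ratio.
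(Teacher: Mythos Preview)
Your proof is correct and follows exactly the approach indicated in the paper, which simply states that the proposition follows ``by the same computations as in the proof of the lower bound in Theorem~\ref{maingauss} using the same family of trial states and letting $E\to\infty$.'' You have supplied the details, including the careful bookkeeping for the case when $K$ is not invertible, which the paper leaves implicit.
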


This proposition follows by the same computations as in the proof of the lower bound in Theorem \ref{maingauss} using the same family of trial states and letting $E\to\infty$.

%%%%%%%%%%%%%%%%%%%%%%%%%%%%%%

%%%%%%%%%%%%%%%%%%%%%%%%%%%%%%%%%%%%%%%%%%%%%%%%%%%%%%%%%%%%%%%%%%%%%%%%%%%%%%%%
%%%%%%%%%%%

\bibliographystyle{amsalpha}

\begin{thebibliography}{12}

\bibitem{Be} S. Beigi, \textit{Sandwiched Renyi divergence satisfies data processing inequality}. J. Math. Phys. \textbf{54} (2013), 122202.

\bibitem{DPTrGi} G. De Palma, D. Trevisan, V. Giovannetti, \textit{One-mode quantum-limited Gaussian channels have Gaussian maximizers}. arXiv:1610.09967

\bibitem{DPTrGi2} G. De Palma, D. Trevisan, V. Giovannetti, \textit{Gaussian states minimize the output entropy of one-mode quantum Gaussian channels}.
IEEE Trans. Inform. Theory \textbf{63} (2017), no. 1, 728--737.

\bibitem{GiHoGP} V. Giovannetti, A. Holevo, R. Garc\'ia-Patr\'on, \textit{A solution of Gaussian optimizer conjecture for quantum channels}. Comm. Math. Phys. \textbf{334} (2015), no. 3, 1553--1571.

\bibitem{Ho0} A. S. Holevo, \textit{Entropy gain and the Choi--Jamiolkowski correspondence for infinite-dimensional quantum evolutions}. Theoret. and Math. Phys. \textbf{166} (2011), no. 1, 123--138.

\bibitem{Ho} A. S. Holevo, \textit{Quantum systems, channels, information}. De Gruyter, Berlin/Boston, 2012.

\bibitem{Ho1} A. S. Holevo, \textit{Gaussian optimizers and the additivity problem in quantum information theory}. Russian Math. Surveys \textbf{70} (2015), no. 2, 331--367.

\bibitem{Li} E. H. Lieb, \textit{Gaussian kernels have only Gaussian maximizers}. Invent. Math. \textbf{102} (1990), 179--208.

\bibitem{MaGiHo} A. Mari, V. Giovannetti, A. S. Holevo, \textit{Quantum state majorization at the output of bosonic Gaussian channels}. Nature Communications \textbf{5} (2014), 3826.

\bibitem{MHRe} A. M\"uller-Hermes, D. Reeb, \textit{Monotonicity of the quantum relative entropy under positive maps}. Ann. H. Poincar\'e \textbf{18} (2017), no. 5, 1777--1788.

\bibitem{RuDy} B. Russo, H. A. Dye, \textit{A note on unitary operators in $C^*$-algebras}. Duke Math. J. \textbf{33} (1966) 413--416. 

\bibitem{Si} B. Simon, \textit{Basic complex analysis. A Comprehensive Course in Analysis, Part 2A}. Amer. Math. Soc., Providence, RI, 2015.

\end{thebibliography}

\end{document}